\documentclass{cccg26}
\usepackage{graphicx,amssymb,amsmath}

\usepackage[T1]{fontenc}
\usepackage[utf8]{inputenc}

\usepackage{amsfonts,amsmath,amssymb}
\usepackage{cite,microtype,graphicx,hyperref}

\usepackage{zref-clever}
\zcsetup{cap}
\let\cref\zcref
\AddToHook{env/theorem/begin}{\zcsetup{countertype={theorem=theorem}}}
\AddToHook{env/lemma/begin}{\zcsetup{countertype={theorem=lemma}}}

\newtheorem{observation}[theorem]{Observation}
\zcRefTypeSetup{observation}{Name-sg=Observation,Name-pl=Observations}
\AddToHook{env/observation/begin}{\zcsetup{countertype={theorem=observation}}}

\title{Tangent Spheres and Integer Distances}

\author{David Eppstein\thanks{Department of Computer Science, University of California, Irvine. Donald Bren Hall, Irvine, CA 92697, USA. eppstein@uci.edu. Research supported in part by NSF grant CCF-2212129.}}

\index{Eppstein, David}

\begin{document}
\thispagestyle{empty}
\maketitle

\begin{abstract}
The Erd\H{o}s--Anning theorem states that any point set for which all distances are integers, in a Euclidean space of any dimension, must be either finite or collinear. We prove the same result in hyperbolic space of any dimension. A quantitative form of our result also extends for the first time to Euclidean spaces of dimension greater than two: if a set of points with integer distances in $\mathbb{E}^D$ or $\mathbb{H}^D$ has a subset of $D+1$ points in general position whose diameter is $d$, then the whole set has size $O(D(d+1)^D)$. To prove these results we formulate a lemma that, if the graph of external tangencies of a system of spheres in Euclidean or hyperbolic space contains a $K_{a,b}$ subgraph for $a,b\ge 3$, then the sets of spheres on each side of this biclique have centers that lie on a hyperplane. This lemma also implies that, in multilateration (determining a position from differences of distances to known landmarks), $D+1$ non-coplanar landmarks always suffice to limit the position to two possibilities.
\end{abstract}

\section{Introduction}

In a 1944 paper, Norman Anning and Paul Erd\H{o}s proved that every point set in the Euclidean plane with all distances integers is either finite or collinear. Their proof uses trigonometric inequalities specific to Euclidean geometry and does not provide an explicit bound on the size of a non-collinear point set (which can have any finite cardinality); they remark without detail that similar techniques extend to all higher-dimensional Euclidean spaces~\cite{AnnErd-BAMS-45}. Soon after, Erd\H{o}s published an alternative, conceptually simpler, and more quantitative proof, which he boiled down to four lines of text: any two points $p$ and $q$ at integer distance $d$ determine a system of $d+1$ hyperbolae containing all points at integer distances from $p$ and $q$. Another point $r$, not on line $pq$, determines another such system of hyperbolae, and these two systems intersect in at most four points per pair of hyperbolae. Thus, whenever an integer-distance set contains a non-collinear triple of points of of diameter $d$ it contains at most $4(d+1)^2$ points~\cite{Erd-BAMS-45}. Again, Erd\H{o}s wrote that this extends to higher dimensions, but without detail, and his simplified proof hides some case analysis (some of the hyperbolae degenerate to lines or rays which must be verified to be distinct) raising doubts about whether the possible degeneracies among multiple hyperbolic surfaces might get messier in higher dimensions.

Subsequent work has provided additional quantitative bounds: if an integer-distance point set in the plane has diameter $d$, it has $O(d)$ points~\cite{Sol-DCG-03} (obviously tight for collinear sets) and if, in addition, it is not collinear, it has $d^{O(1/\log\log d)}$ points~\cite{GreIliPel-24}.

The use of trigonometry in the Anning--Erd\H{o}s proof, and of algebraic geometry  in the Erd\H{o}s proof (in the intersection numbers of degree-two algebraic curves), limit them to Euclidean geometry. Another recent proof by the author~\cite{Epp-JoCG-26} instead uses the topological properties of additively-weighted Voronoi diagrams. In this way it extends versions of the same result, that an integer-distance set is either collinear (belonging to a geodesic) or finite, to many other two-dimensional metrics including all strictly convex distance functions, all surfaces of three-dimensional convex bodies, and all complete two-dimensional Riemannian metrics of bounded genus, including the hyperbolic plane. However, there are obstacles to extending this Voronoi diagram based proof to higher dimensions, especially for convex distance functions~\cite{IckKleLe-FI-95}.

In our earlier work~\cite{Epp-JoCG-26} we asked whether the Erd\H{o}s--Anning theorem can be proven for higher-dimensional hyperbolic spaces, one of the most well-behaved of non-Euclidean geometries. The lack of detail about high dimensions in the original papers of Erd\H{o}s and Anning also raises the question of how the quantitative bounds known for the Euclidean plane extend to higher dimensional Euclidean spaces. Here we answer both of these questions. We prove a bound of $2(D+1)(d+1)^D+D+1$ on the size of an integer-distance point set that contains a general-position subset of diameter $d$, both in in $D$-dimensional Euclidean space $\mathbb{E}^D$ or hyperbolic space $\mathbb{H}^D$. This new bound has the same form (with a different constant factor) as Erd\H{o}s's bound for $D=2$.

Rather than using trigonometry, hyperboloids, or Voronoi diagrams, our proof is based on the tangency and orthogonality properties of spheres. In this way, it connects the problem of integer distances to the theories of circle packing, sphere packing, and inversive geometry, and to the lifting transformation of computational geometry. In particular, we prove a key lemma that may be of independent interest: if the graph of external tangencies of a system of spheres in Euclidean or hyperbolic space of any dimension contains a $K_{a,b}$ subgraph for $a,b\ge 3$, then the centers of the spheres on each side of this biclique must lie on a hyperplane. 

\subsection{Applications to Multilateration}

\begin{figure}[t]
\centering\includegraphics[width=0.4\textwidth,alt={Spheres on an ellipse, and the Dupin cyclide that they approximate}]{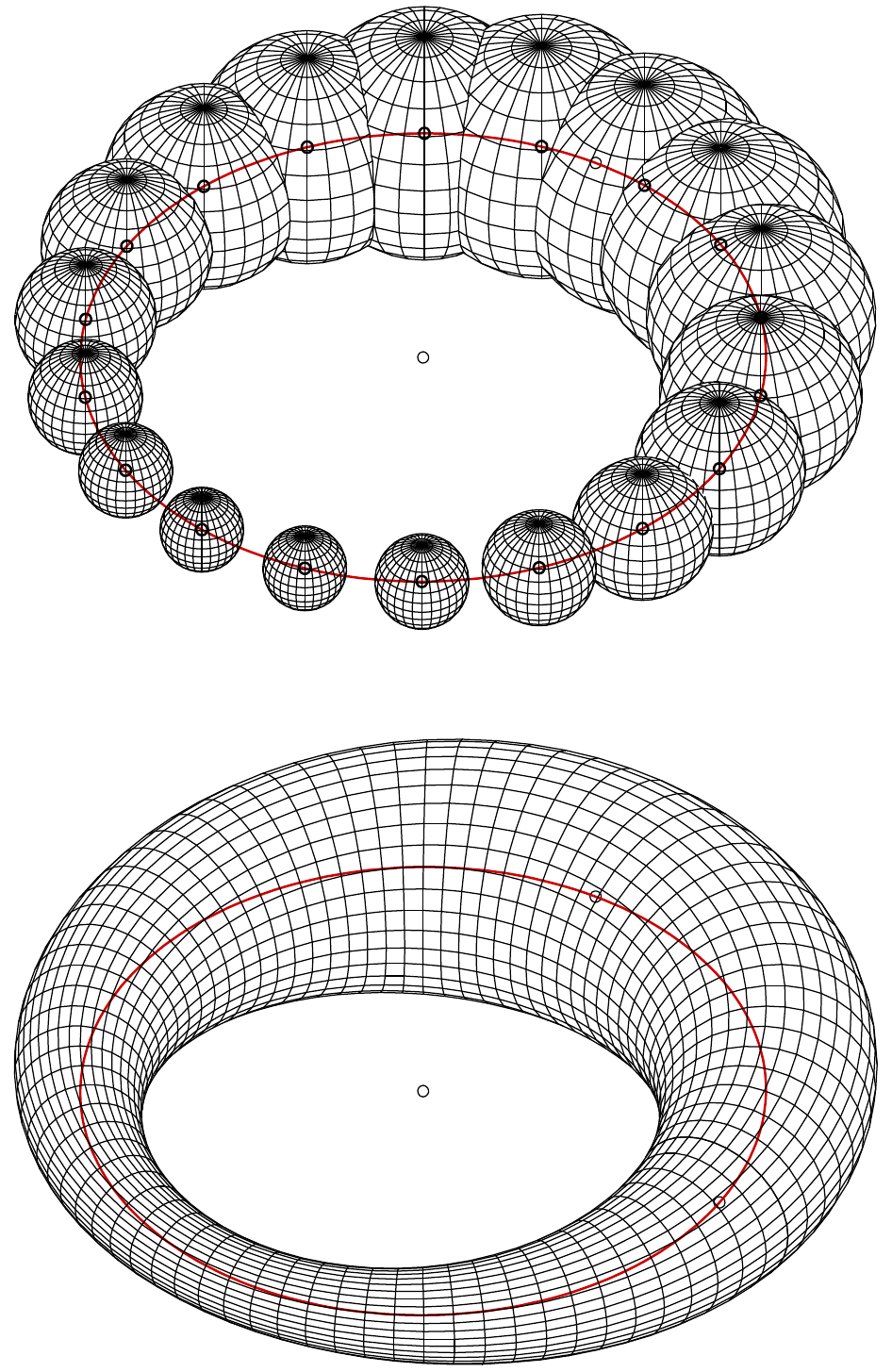}
\caption{A Dupin cyclide as an envelope of spheres centered on an ellipse. CC-BY-SA image by Ag2gaeh from \url{https://commons.wikimedia.org/wiki/File:Zyklide-kanalfl-1-2.svg}.}
\label{fig:cyclide}
\end{figure}

Our lemma has practical implications for $D$-dimensional \emph{multilateration}, methods for determining a position from the differences of distances to a set of known landmarks. This arises in geolocation using the GPS system, where the use of differences corrects for the unknown inaccuracy of the clock in a GPS device, relative to coordinated universal time~\cite{AbeCha-TAES-91}. When determining a position from distances to landmarks, $D$ landmarks (in general position) suffice to limit the position to two possibilities, but when using differences of distances, one more landmark is needed.
If the closest of the landmarks is selected as a reference, and one constructs spheres around each landmark whose radius is the difference of distances with this reference landmark (plus some small $\varepsilon$ to make the smallest sphere non-degenerate), then the position to be determined is the center of another sphere externally tangent to all of these spheres. Typically there are only two choices for this position, only one of which is near the surface of the Earth. Alternatively one could disambiguate these two positions using another landmark.

Our lemma allows us to make a more precise statement about  exactly which sets of landmarks lead to degeneracies in this location process.
It implies that the only degenerate cases for this multilateration problem arise when the landmarks lie on a common hyperplane: $D+1$ landmarks that are not coplanar always suffice to limit the determined position to two points. For, if there could be three positions each consistent with the known differences of distances to a system of landmarks, then the spheres centered on the landmarks, and the tangent spheres to them centered on the positions, would form two families of tangent spheres to which our lemma applies, showing that the landmarks must be coplanar. 

A classical construction shows that for $D=3$, coplanar landmarks are indeed a degenerate case. Even $D+2$ coplanar landmarks are insufficient to localize the position to a finite set of possibilities. The construction uses inversion on a torus.  Any number of congruent spheres centered on a circle are internally tangent to a torus or self-crossing torus, and this torus and the spheres within it are externally tangent to arbitrarily many spheres centered on a line perpendicular to the circle of sphere centers. Inversion through a sphere preserves the tangencies, taking the torus to a \emph{Dupin cyclide} (\cref{fig:cyclide}), a surface shaped like a deformed torus that again has two infinite families of spheres, each tangent to one side of the surface and through it tangent to all of the spheres in the other family. However, inversion does not map the circle of sphere centers to another circle; instead, each Dupin cyclide has two perpendicular conic curves as its \emph{directrices}, the curves formed by the centers of its tangent spheres~\cite{HilCoh-52,ChaDutHog-VC-89}. If five landmarks in $\mathbb{R}^3$ are coplanar but otherwise in general position, then they lie on a conic, which forms the directrix of a Dupin cyclide. Five spheres, centered at the landmarks and tangent to the cyclide, are tangent to infinitely many spheres centered on the other directrix, whose centers cannot be distinguished from each other by multilateration.

\section{Integer Distances}

In this section we state but do not prove our lemma on tangent spheres, and use it to bound the number of points in an integer-distance subset of any Euclidean or hyperbolic space.

By a \emph{sphere} in Euclidean space $\mathbb{E}^D$ or hyperbolic space $\mathbb{H}^D$, we always mean a $(D-1)$-sphere, a full-dimensional hypersphere. Define an $(a,b)$-\emph{tangency biclique} to consist of two systems of $a$ spheres $A_i$ and $b$ spheres $B_j$, in a Euclidean or hyperbolic space, such that each $A_i$ and $B_j$ are externally tangent. Within $A_i$ and $B_j$ the spheres are not required to be disjoint, but they must be distinct. For example \emph{Soddy's hexlet} can form a $(3,6)$-tangency biclique in $\mathbb{E}^3$, although many images of Soddy's hexlet, including its earliest known appearance on an 1822 Japanese \emph{sangaku} tablet, instead depict one sphere surrounding eight others. $\mathbb{E}^3$ contains $(a,b)$-tangency bicliques for arbitrary $a$ and $b$, consisting of $a$ spheres each internally tangent to a torus on a longitudinal circle and $b$ spheres each externally tangent on a latitudinal circle. The main technical lemma of our proof is:

\begin{lemma}
\label{lem:biclique}
In any $(3,b)$-tangency biclique in $\mathbb{E}^D$ or $\mathbb{H}^D$ defined by spheres $A_i$ and $B_j$, the centers of the spheres $B_j$ all lie on a common hyperplane.
\end{lemma}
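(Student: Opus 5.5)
Work in the Euclidean case first. Let $A_i$ have center $a_i$ and radius $r_i$ ($i=1,2,3$), and let $B_j$ have center $b_j$ and radius $s_j$. External tangency means $|b_j-a_i| = r_i+s_j$ for every $i,j$. Squaring gives
\[
|b_j|^2 - 2\,a_i\cdot b_j + |a_i|^2 = r_i^2 + 2 r_i s_j + s_j^2 .
\]
Subtract the equation for $i=1$ from the equations for $i=2$ and $i=3$. The quadratic terms $|b_j|^2$ and $s_j^2$ cancel, leaving two equations that are \emph{linear} in the unknowns $(b_j,s_j)\in\mathbb{R}^{D+1}$:
\[
2(a_1-a_i)\cdot b_j - 2(r_i-r_1)s_j = r_i^2-r_1^2-|a_i|^2+|a_1|^2,\qquad i=2,3.
\]
So the lifted points $(b_j,s_j)$ all lie in the intersection of two affine hyperplanes of $\mathbb{R}^{D+1}$. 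This is the lifting viewpoint mentioned in the introduction.

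The plan is to eliminate $s_j$ from these two equations and obtain one affine equation in $b_j$ alone. Take the combination $(r_3-r_1)\cdot(\text{eq}_2)-(r_2-r_1)\cdot(\text{eq}_3)$. It gives
\[
2\bigl((r_3-r_1)(a_1-a_2)-(r_2-r_1)(a_1-a_3)\bigr)\cdot b_j = c
\]
for a constant $c$ independent of $j$. This is a hyperplane equation for the centers $b_j$ provided the normal vector $n=(r_3-r_1)(a_1-a_2)-(r_2-r_1)(a_1-a_3)$ is nonzero. If $r_2=r_1$, the $i=2$ equation already contains no $s_j$, and we can use it directly when $a_2\ne a_1$.

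The main obstacle is the degenerate case $n=0$ (and the analogous degenerate cases when radii coincide). Since $A_1,A_2,A_3$ are distinct spheres, some difference $(a_1-a_i,\,r_i-r_1)$ is nonzero, so at least one of the two linear equations is nontrivial. Then $n=0$ means either that the two row vectors $(a_1-a_i,\,r_1-r_i)$ are parallel, or that $a_1,a_2,a_3$ are collinear with radii varying affinely along the line. In the parallel case, the two equations are either inconsistent, so there are no $B_j$, or identical. If they are identical, the single equation $2(a_1-a_2)\cdot b - 2(r_2-r_1)s = \text{const}$ still constrains $(b,s)$ but not $b$ alone. Here I would instead use the third (unsubtracted) quadratic equation. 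Parametrize $s$ as an affine function of $b$ along the constraint; then $|b-a_1| = r_1+s(b)$ defines a quadric of revolution (a hyperboloid or cone sheet) with axis through the $a_i$. The aim is to show that three distinct spheres with collinear centers and affinely varying radii cannot all be externally tangent to a common sphere unless two of them coincide, or else that such $B_j$ are forced into a hyperplane. I expect this to require a direct argument: three spheres whose $(a_i,r_i)$ are collinear in the lifted space form a pencil, and a sphere externally tangent to two members of a pencil generically is not tangent to a third distinct member. If the degeneracy cannot be ruled out this way, I would check whether the hypothesis needs $a\ge 3$ in the form of non-collinear lifted points, and handle it by showing the lifted points $(a_i,r_i)$ are affinely independent whenever a common externally tangent sphere exists.

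For $\mathbb{H}^D$, use the hyperboloid model. A point is $x$ with $\langle x,x\rangle=-1$, and distance satisfies $\cosh d=-\langle x,y\rangle$. Tangency gives
\[
-\langle a_i,b_j\rangle = \cosh r_i\cosh s_j+\sinh r_i\sinh s_j .
\]
Divide by $\cosh s_j$ and set $\tilde b_j=b_j/\cosh s_j$ and $t_j=\tanh s_j$. The equation becomes linear in $(\tilde b_j,t_j)$: $-\langle a_i,\tilde b_j\rangle - \sinh(r_i)\,t_j=\cosh r_i$. Three such equations, with $t_j$ eliminated, give a homogeneous-after-rescaling linear condition $\langle n,b_j\rangle=\lambda\cosh s_j$. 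Here it is easier to avoid division. Write the three equations as linear in the vector $(b_j,\cosh s_j,\sinh s_j)$ and eliminate the two scalar unknowns using the three equations. This yields a single linear equation $\langle n,b_j\rangle=0$ in the ambient Minkowski space, whose intersection with the hyperboloid is a totally geodesic hyperplane of $\mathbb{H}^D$ when $n$ is spacelike. The nondegeneracy analysis parallels the Euclidean case. One also has to check that $n$ is spacelike, which should follow from the existence of solutions on the hyperboloid together with the fact that $n\neq 0$.
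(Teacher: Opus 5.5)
\textbf{The gap: the nested degenerate case is unresolved.} Your elimination argument is sound whenever $n\neq 0$ or all radii are equal. It is genuinely different from the paper's. But the one configuration where it breaks down is left open, and the resolutions you suggest rest on false claims.

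Write the degenerate (collinear lifted) case as $a_i=a_1+t_iw$, $r_i=r_1+t_i\rho$, with $t_2\neq t_3$ both nonzero. Dividing $\mathrm{eq}_i$ by $t_i$ gives $-2w\cdot b-2\rho s=2r_1\rho-2a_1\cdot w+t_i(\rho^2-|w|^2)$. So the two equations are inconsistent when $|w|\neq|\rho|$, as you say. They are identical exactly when $|w|=|\rho|$, that is, $|a_i-a_1|=|r_i-r_1|$: the three spheres $A_i$ are nested and mutually internally tangent at one common point $p$.

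This configuration has infinitely many common externally tangent spheres, namely every sphere touching the outermost $A_i$ from outside at $p$. So both of your suggested routes fail:
\begin{itemize}
\item your ``aim'' that such spheres admit no common externally tangent sphere (unless two coincide) is false;
\item your fallback that the lifted points $(a_i,r_i)$ must be affinely independent whenever such a sphere exists is also false.
\end{itemize}
Likewise, the constrained set is not a hyperboloid or cone sheet. Since $|w|=|\rho|$, it is a degenerate paraboloid.

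\textbf{The missing computation.} In this case the surviving linear equation reads $s+r_1=u\cdot(b-a_1)$, with $u=-w/\rho$ a unit vector. Combined with $|b-a_1|=r_1+s$, the equality case of Cauchy--Schwarz forces $b-a_1=\lambda u$ with $\lambda\ge 0$. So every $b_j$ lies on the line of centers of the $A_i$, hence on a hyperplane. This is precisely the special case the paper disposes of first, before its antisimilitude argument: one $A_i$ contained in another forces a common tangency point and collinear centers.

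\textbf{The hyperbolic case has the same hole.} The normal $n=\sum_i\lambda_i a_i$ vanishes exactly for the nested configuration. For instance, with $a_i=\cosh r_i\,e_0+\sinh r_i\,e_1$, all three spheres pass through $e_0$, and the two elimination conditions give $n=0$. So ``the nondegeneracy analysis parallels the Euclidean case'' must include this argument as well.

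\textbf{Comparison once repaired.} With this case added, your route is a correct and more elementary alternative to the paper's. A radical-axis-style elimination, in the lifted coordinates $(b,s)$ or directly in the hyperboloid model, replaces spheres of antisimilitude, pencils and stereographic lifting. It also handles $\mathbb{H}^D$ intrinsically rather than through the Poincar\'e model.
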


From this we can prove that when a point set has integer distances to a small subset of points in general position, the whole set has bounded size:

\begin{theorem}
\label{thm:basis}
Let $C$ be a set of $D+1$ points in $\mathbb{E}^D$ or $\mathbb{H}^D$, not all on a hyperplane. Then the number of points distinct from $C$ that have integer distances to all points in $C$ is at most $2(D+1)\,(\operatorname{diam}(C)+1)^D$.
\end{theorem}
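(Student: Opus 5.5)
The plan is to classify every point $p \notin C$ that has integer distances to all of $C$ by a discrete \emph{signature}. I then show that each signature class has at most two points, using \cref{lem:biclique}, which is applicable because the points of $C$ do not lie on a hyperplane.

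Write $C=\{c_0,\dots,c_D\}$ and $d=\operatorname{diam}(C)$. For such a point $p$, choose an index $k$ for which $c_k$ is a nearest point of $C$ to $p$, breaking ties arbitrarily. For each $i\ne k$, set $\delta_i = \operatorname{dist}(p,c_i)-\operatorname{dist}(p,c_k)$.
\begin{itemize}
\item Each $\delta_i$ is an integer, since both distances are integers.
\item $\delta_i\ge 0$, because $c_k$ is nearest to $p$.
\item $\delta_i\le \operatorname{dist}(c_i,c_k)\le d$, by the triangle inequality, which holds in both $\mathbb{E}^D$ and $\mathbb{H}^D$.
\end{itemize}
So the signature $(k,(\delta_i)_{i\ne k})$ takes at most $(D+1)(d+1)^D$ values. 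It therefore suffices to show that no three distinct points $p$ share a signature. This gives the bound $2(D+1)(d+1)^D$.

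Suppose three distinct points $p_1,p_2,p_3\notin C$ share the signature $(k,(\delta_i))$. Let $m=\min_j \operatorname{dist}(p_j,c_k)$, which is positive since $p_j\ne c_k$, and fix $0<\varepsilon<m$.
\begin{itemize}
\item Let $B_i$ be the sphere centered at $c_i$ with radius $\delta_i+\varepsilon$, taking $\delta_k=0$.
\item Let $A_j$ be the sphere centered at $p_j$ with radius $\operatorname{dist}(p_j,c_k)-\varepsilon>0$.
\end{itemize}
Then $\operatorname{dist}(p_j,c_i)=\operatorname{dist}(p_j,c_k)+\delta_i$ equals the sum of the radii of $A_j$ and $B_i$, so $A_j$ and $B_i$ are externally tangent. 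This holds in hyperbolic space as well as Euclidean space, since external tangency of two spheres means exactly that the distance between their centers equals the sum of their radii.

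The spheres $A_j$ are distinct because their centers are distinct, and likewise the $B_i$. So the spheres form a $(3,D+1)$-tangency biclique. By \cref{lem:biclique}, the centers $c_0,\dots,c_D$ of the $B_i$ lie on a common hyperplane, contradicting the hypothesis on $C$.

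The main point requiring care is that the lemma is applied with the three unknown points on the ``$A$'' side and the landmarks $C$ on the ``$B$'' side, so that its conclusion concerns $C$. The other delicate step is keeping all radii positive, which is handled by the choice of $\varepsilon$ and by the reference $c_k$ being nearest to $p$ (this also gives $\delta_i\ge0$). If instead we used a fixed reference point, the $\delta_i$ would range over $[-d,d]$, and the count would be $(2d+1)^D$ rather than $(D+1)(d+1)^D$.
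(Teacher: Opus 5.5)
Your proof is correct and follows the paper's approach. You classify each point by its integer vector of distance differences relative to a nearest landmark, which gives at most $(D+1)(\operatorname{diam}(C)+1)^D$ classes, and you turn three points in one class into a $(3,D+1)$-tangency biclique, with spheres of radius $\delta_i+\varepsilon$ around the landmarks and complementary spheres around the candidate points, contradicting \cref{lem:biclique}. The differences are only cosmetic: you use a generic $\varepsilon$ where the paper uses $\tfrac12$ (valid there since $d_{\min}\ge 1$), and you record the reference index $k$ explicitly instead of requiring a zero coordinate.
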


\begin{proof}
Number the points in $C$ as $c_0,\dots c_D$. If any point $p$ has integer distances to all points in $C$, then by the triangle inequality each two of these distances differ by at most $\operatorname{diam}(C)$. Let $d_{\min}=\min\operatorname{dist}(p,c_i)$. Then we can define a vector of distance differences, $\delta^{(p,C)}$, by $\delta^{(p,C)}_i=\operatorname{dist}(p,c_i)-d_{\min}$. There are at most $(D+1)(\operatorname{diam}(C)+1)^D$ choices for this vector, because all of its coefficients are integers in the range $[0,\operatorname{diam}(C)]$ and at least one of them is zero. We claim that each choice can be the vector of distance differences for at most two choices of the point~$p$.

To prove this, consider any vector $\delta\in [0,\operatorname{diam}(C)]^{D+1}$,
and from it construct a system of spheres $B_i$ centered at each point $c_i$ with radius $\delta_i+\tfrac12$. (Here, the addition of $\tfrac12$ ensures that these spheres have nonzero radius, avoiding degenerate cases, while remaining small enough that none of them contain $p$.) Then each point $p_i$ that has $\delta$ as its vector of distance differences is the center of a sphere $A_i$ with radius $d_{\min}-\tfrac12$ that is externally tangent to each sphere $B_i$. Thus, if there could exist three such points $p_i$, the spheres $A_i$ and $B_i$ would form a $(3,D+1)$-tangency biclique with the centers of its spheres $B_i$ not all on a hyperplane, contradicting \cref{lem:biclique}.
\end{proof}

The finiteness of integer-distance point sets in Euclidean and hyperbolic spaces follows:

\begin{theorem}
Let $S$ be any set of points in $\mathbb{E}^D$ or $\mathbb{H}^D$ among which all distances are integers. Then $S$ is finite.
\end{theorem}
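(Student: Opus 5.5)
The statement as printed cannot hold without a qualifier: the integer points on a line, or on any geodesic of $\mathbb{H}^D$, form an infinite integer-distance set. I will therefore prove the intended Erd\H{o}s--Anning form: either $S$ lies on a single geodesic line, or $S$ is finite. The plan is to reduce to \cref{thm:basis}, applied inside the smallest totally geodesic subspace that contains $S$.

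First, let $F$ be the smallest totally geodesic subspace of the ambient space containing $S$. In $\mathbb{E}^D$ this is the affine hull of $S$. In $\mathbb{H}^D$ it is the intersection of all totally geodesic subspaces containing $S$; in the hyperboloid model this is the intersection of $\mathbb{H}^D$ with the linear span of $S$. Let $k=\dim F$.

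Second, I use two standard facts.
\begin{itemize}
\item $F$ is isometric to $\mathbb{E}^k$ or $\mathbb{H}^k$.
\item Distances measured in $F$ agree with ambient distances, because geodesics between points of a totally geodesic subspace stay in it.
\end{itemize}
Hence $S$, viewed inside $F$, is still an integer-distance set.

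Third, split into cases on $k$.
\begin{itemize}
\item If $k\le 1$, then $S$ is a single point or lies on a geodesic line. This is the collinear exception.
\item If $k\ge 2$, minimality of $F$ lets me choose $k+1$ points $C\subseteq S$ that do not lie on a common hyperplane of $F$. For example, greedily add points of $S$ that increase the dimension of the spanned subspace until it reaches $k$.
\end{itemize}
In the case $k\ge 2$, every other point of $S$ has integer distances to all points of $C$. Applying \cref{thm:basis} in dimension $k$ then gives
\[
|S|\le k+1+2(k+1)(\operatorname{diam}(C)+1)^k,
\]
which is finite. This also recovers the quantitative bound stated in the introduction.

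The main point needing care is justifying that $F$ is a genuine copy of $\mathbb{E}^k$ or $\mathbb{H}^k$ with the induced metric, and that $C$ can be chosen in general position within $F$. Both are routine: in the hyperboloid model, totally geodesic subspaces are intersections with linear subspaces of Minkowski space, and general position reduces to linear independence of the chosen vectors.
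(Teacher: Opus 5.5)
Your proof is correct and follows the paper's own argument. You choose a general-position subset $C\subseteq S$ spanning the smallest flat (totally geodesic subspace) containing $S$, then apply \cref{thm:basis} inside that flat to get the same bound $2|C|(\operatorname{diam}(C)+1)^{|C|-1}$ on the remaining points. Your caveat is also right, and it fixes a real omission in the printed statement and proof: the collinear exception stated in the abstract is needed, and \cref{thm:basis} genuinely fails for $D=1$ (every integer on a line has integer distances to $0$ and $1$), so restricting its use to $k\ge 2$, as you do, is necessary.
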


\begin{proof}
Choose a subset $C$ of $S$ in general position whose affine span equals the affine span of $S$. By \cref{thm:basis}, applied to the affine span,  $S$ has at most $2|C|(\operatorname{diam}(C)+1)^{|C|-1}$ additional points.
\end{proof}

Unlike Euclidean geometry, hyperbolic geometry has an intrinsic length scale, but our proof does not make use of this: it applies to any scaling of hyperbolic distance.

\section{Proof of the Key Lemma}

In this section we prove \cref{lem:biclique}, breaking the proof into several smaller lemmas.

\subsection{Lifting Transformation}

In order to work with spheres in $\mathbb{E}^D$ and $\mathbb{H}^D$, it is convenient to use the lifting transformation familiar in computational geometry, an equivalence between spheres in $\mathbb{E}^D$ and hyperplanes in $\mathbb{E}^{D+1}$. Instead of using the now-common version of this transformation in which we project $\mathbb{E}^D$ vertically onto a hyperplane in $\mathbb{E}^{D+1}$, we use the original version, based on an embedding of $\mathbb{E}^D$ into $\mathbb{E}^{D+1}$ and stereographic projection from $\mathbb{E}^D$ to a tangent sphere $\mathbb{S}^D$ in $\mathbb{E}^{D+1}$~\cite{Bro-IPL-79}. This projection maps spheres (and hyperplanes, viewed as degenerate spheres) in $\mathbb{E}^D$ to spheres in $\mathbb{S}^D$, and can be chosen to map the ball of a Poincar\'e model of $\mathbb{H}^D$ to a hemisphere on $\mathbb{S}^D$. For a non-degenerate sphere in $\mathbb{E}^D$ or $\mathbb{H}^D$, the exterior of the sphere maps to the side of its image in $\mathbb{S}^D$ that contains the pole of the stereographic projection, and the interior of the sphere maps to the side that does not contain the pole, so we can define interior and exterior in a consistent way on $\mathbb{S}^D$. As well as modeling hyperbolic spheres, the spheres on $\mathbb{S}^D$ also model the hyperbolic sphere at infinity, hyperbolic hyperplanes, hyperspheres, or horospheres, with the last three types of objects distinguished by being tangent to the sphere at infinity, crossing it at a non-right angles, or crossing it at a right angle respectively.

Two spheres in any of these $D$-dimensional spaces are \emph{orthogonal} if there exists a line through a point of their intersection that is tangent to one of them and perpendicular to the other. Such a line can be obtained as the radius of either sphere through any point of intersection. This property is preserved by the lifting transformation, under which the point of intersection is lifted to a point on  $\mathbb{S}^D$ through which there again exists a line tangent to one sphere (and to ${S}^D$) and perpendicular to the other.

Spheres on $\mathbb{S}^D$ correspond to hyperplanes and points in $\mathbb{E}^{D+1}$ in two dual ways. A hyperplane that intersects $\mathbb{S}^D$ does so in a sphere, and every sphere on $\mathbb{S}^D$ is the intersection of $\mathbb{S}^D$ with a hyperplane. From any point $p$ exterior to $\mathbb{S}^D$, the lines of tangency to $\mathbb{S}^D$ meet $\mathbb{S}^D$ in a sphere (the horizon as viewed from $p$), and every sphere on $\mathbb{S}^D$ is the horizon of a point $p$ (possibly at infinity, for a great sphere of $\mathbb{S}^D$). When finite, the viewpoint defining a sphere can be obtained from the hyperplane defining the sphere, by inverting through $\mathbb{S}^D$ the closest point on the hyperplane to the center of $\mathbb{S}^D$.

Two spheres on $\mathbb{S}^D$ are orthogonal if and only if the viewpoint defining one sphere as its horizon belongs to the hyperplane defining the other sphere as its intersection. If so, any line through the viewpoint and a point of intersection of the spheres is tangent to the intersection sphere and perpendicular to the horizon sphere. Every line tangent to an intersection sphere lies in the defining hyperplane, and every line perpendicular to a horizon sphere and tangent to $\mathbb{S}^D$ passes through the viewpoint, so every pair of orthogonal spheres must have this form.

\subsection{Central Hyperplanes}

\begin{figure}[t]
\centering\includegraphics[width=0.8\columnwidth,alt={A blue circle crossed at right angles by nine black circles and three red lines}]{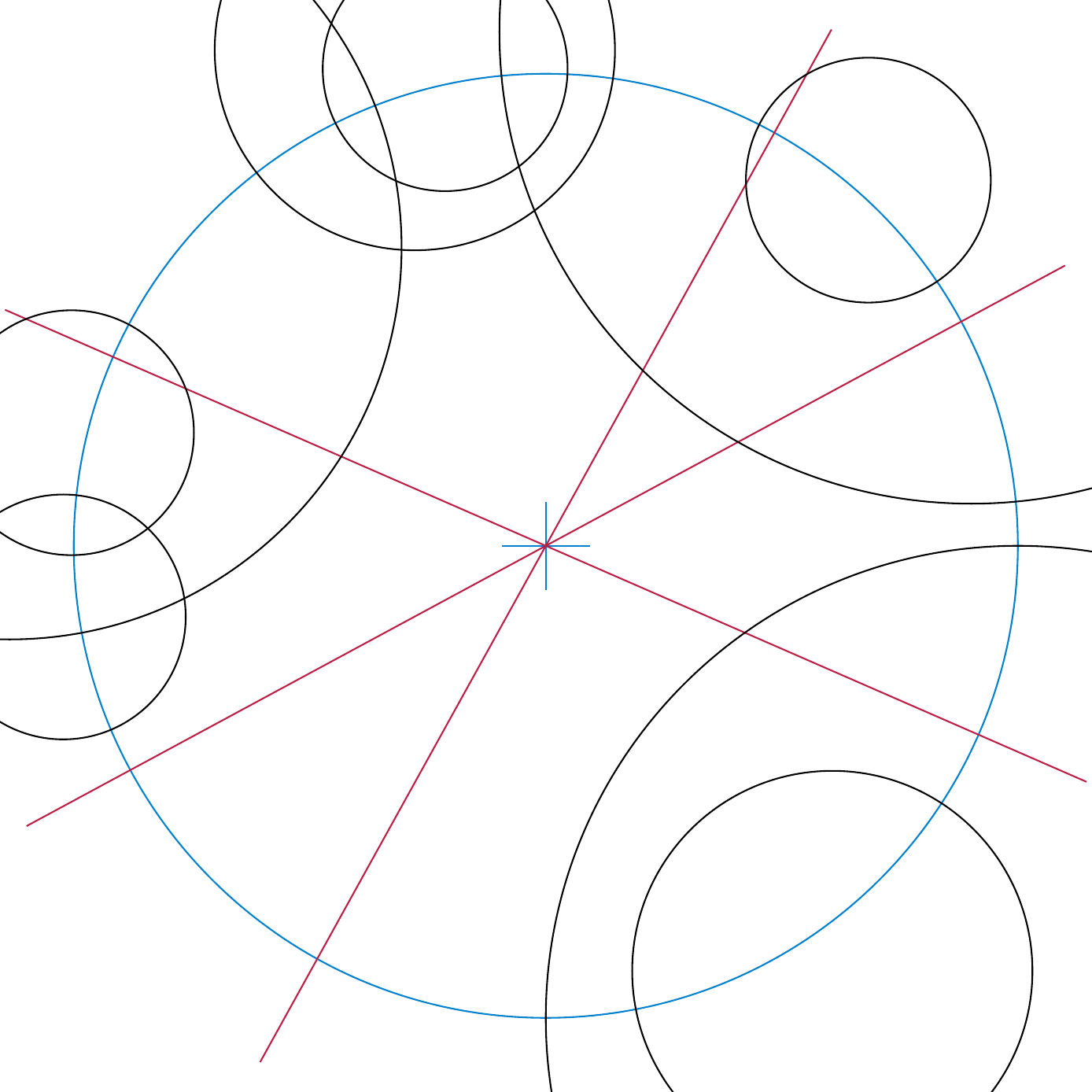}
\caption{If spheres or hyperplanes cross a given sphere (blue) at right angles, then only the hyperplanes (red) pass through the center of the given sphere (\cref{lem:central-hyperplane}).}
\label{fig:orthocircles}
\end{figure}

It is possible to characterize Euclidean or hyperbolic hyperplanes using the lifting transformation. Hyperplanes in $\mathbb{E}^D$ correspond to intersections of $\mathbb{S}^D$ with hyperplanes through the pole of stereographic projection and to horizons from viewpoints on the hyperplane tangent to the pole. Hyperplanes in $\mathbb{H}^D$ correspond to intersections by hyperplanes through the point at vertical infinity and to horizons from viewpoints on the hyperplane through the equator of $\mathbb{S}^D$. However, instead of using these lifting-based characterizations, it is more convenient for us to use a characterization in the unlifted Euclidean or hyperbolic geometry.
 
\begin{lemma}
\label{lem:central-hyperplane}
In Euclidean or hyperbolic space, if $S_0$ is a bounded sphere and $S_1$ is a sphere, hyperplane, hypersphere, or horosphere orthogonal to $S_0$, then $S_1$ is a Euclidean or hyperbolic hyperplane (respectively) if and only if $S_1$ passes through the center of $S_0$ (\cref{fig:orthocircles}).
\end{lemma}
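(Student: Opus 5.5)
We prove the two directions separately, using only the local geometry at the points where $S_0$ and $S_1$ meet. Let $c$ be the center and $r$ the radius of $S_0$.

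\emph{If $S_1$ is a hyperplane, it passes through $c$.} Let $x$ be a point of $S_0\cap S_1$. By the definition of orthogonality, the radius (geodesic) of $S_0$ through $x$ is tangent to $S_1$ at $x$. A geodesic that is tangent to a totally geodesic hyperplane at one of its points lies entirely in that hyperplane. This holds in $\mathbb{E}^D$ because hyperplanes are affine, and in $\mathbb{H}^D$ because hyperbolic hyperplanes are totally geodesic. Hence the whole radius geodesic from $x$ to $c$ lies in $S_1$, so $c\in S_1$.

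\emph{If $S_1$ passes through $c$ and is orthogonal to $S_0$, it is a hyperplane.} For this direction, let $H$ be the hyperplane through $c$ with the same tangent space as $S_1$ at some intersection point $x$. We must first check that the tangent space at $x$ contains the direction of the radius $cx$, which is exactly orthogonality. Since $H$ contains $c$, every geodesic from $c$ meets $S_0$ orthogonally, so $H$ is orthogonal to $S_0$ along all of $H\cap S_0$.

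We then want to show $S_1=H$ by a uniqueness argument. The cleanest route is to use the symmetry of geodesics from $c$. Because $S_1$ passes through $c$, it meets every geodesic ray from $c$ lying in its tangent directions. A sphere, horosphere, or hypersphere that is not a hyperplane through $c$ cannot contain $c$ while meeting $S_0$ at a right angle. To make this precise, I would work with the \emph{power} of the point $c$ with respect to $S_1$:
\begin{itemize}
\item In $\mathbb{E}^D$, orthogonality of a genuine sphere $S_1$ (center $c_1$, radius $r_1$) gives $|c-c_1|^2=r^2+r_1^2$. So the power of $c$ is $r^2>0$, and $c\notin S_1$.
\item In $\mathbb{H}^D$, place $c$ at the origin of the Poincar\'e ball. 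Then $S_0$ is a Euclidean sphere centered at the origin, and every object $S_1$ in the list is represented by a Euclidean sphere or a plane. Hyperbolic hyperplanes are exactly those represented by planes through the origin or by spheres orthogonal to the boundary sphere. Orthogonality is conformal, so the Euclidean computation applies. A Euclidean sphere orthogonal to $S_0$ does not contain the origin, by the power computation above. A Euclidean plane orthogonal to $S_0$ must contain the origin, by the first direction applied in $\mathbb{E}^D$. So if $S_1$ contains $c$ it is represented by a plane through the origin, which is a hyperbolic hyperplane.
\end{itemize}

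The Euclidean case therefore reduces entirely to the power computation $|c-c_1|^2=r^2+r_1^2$. This follows from the right angle at $x$ in the triangle $c\,x\,c_1$, by Pythagoras.

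The main obstacle is the hyperbolic case. There we must make sure that moving $c$ to the origin of the Poincar\'e model represents every hyperbolic sphere, horosphere, hypersphere, and hyperplane by Euclidean spheres and planes in the expected way. The key fact to record is that a Euclidean plane through the origin is a hyperbolic hyperplane, while a Euclidean sphere inside the ball never is a hyperplane through the origin. Once that model-level fact is recorded, the conformality of the Poincar\'e model transfers orthogonality directly, and both directions in $\mathbb{H}^D$ follow from the Euclidean power argument.
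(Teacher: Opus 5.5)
Your proof is correct, and its architecture matches the paper's. You settle $\mathbb{E}^D$ directly, then place the center of $S_0$ at the origin of a Poincar\'e ball (the paper phrases this as making $S_0$ concentric with $S_\infty$) and transfer orthogonality by conformality. The differences are in the local tools.

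For the claim that a genuine sphere orthogonal to $S_0$ misses its center, the paper argues synthetically. It takes the tangent hyperplane $S_2$ to $S_1$ at an intersection point, observes that $S_2$ passes through the center, and uses that $S_1$ meets $S_2$ only at the tangency point. You instead use Pythagoras, $|c-c_1|^2=r^2+r_1^2$, so the power of $c$ with respect to $S_1$ is $r^2>0$. That is shorter and purely computational.

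For the direction ``hyperplane $\Rightarrow$ through the center'' you argue intrinsically: a geodesic tangent to a totally geodesic hyperplane lies in it. This works verbatim in $\mathbb{H}^D$, and it buys something. The paper's hyperbolic paragraph only identifies the hyperbolic hyperplanes \emph{through} the center with Euclidean planes through the origin. It never explicitly rules out a hyperbolic hyperplane, modeled by a Euclidean sphere orthogonal to $S_\infty$, that is orthogonal to $S_0$ but misses its center. (One can rule this out by Pythagoras, since no Euclidean sphere is orthogonal to two concentric spheres of different radii.) Your argument disposes of that case directly.

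The opening paragraph of your second direction, with the hyperplane $H$ and the ``symmetry of geodesics,'' is never used and can be deleted.
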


\begin{proof}
In the Euclidean case, if $S_1$ is a hyperplane orthogonal to $S_0$, it contains a line orthogonal to $S_0$, which contains a radius of $S_0$ and therefore the center of $S_0$. For any non-degenerate sphere $S_1$ orthogonal to $S_0$ , there exists a tangent hyperplane $S_2$ through any point of intersection of $S_0$ and $S_1$; $S_2$ passes through the center of $S_0$, and $S_1$ intersects $S_2$ only at this point of tangency, so $S_2$ does not pass through the center of $S_0$.

In the hyperbolic case, choose a Poincar\'e model of hyperbolic geometry in which $S_0$ is concentric with the sphere at infinity $S_\infty$. Hyperbolic hyperplanes are modeled in Euclidean space by spheres perpendicular to $S_{\infty}$; by the Euclidean case of the lemma, the hyperbolic hyperplanes through the shared center of $S_0$ and $S_{\infty}$ are modeled by Euclidean hyperplanes through this point. Because $S_0$ and $S_{\infty}$ are concentric,  each hyperbolic hyperplane through this center point is perpendicular to $S_0$. For any other (hyper,horo)sphere $S_1$ perpendicular to $S_0$, there is a tangent (Euclidean) hyperplane $S_2$ through any point of intersection with $S_0$ and $S_1$; $S_2$ passes through the center of $S_0$, and $S_1$ intersects $S_0$ only at the point of tangency, so $S_1$ does not contain the center.
\end{proof}

\subsection{Pencils of Spheres}

\begin{figure}[t]
\centering\includegraphics[width=0.8\columnwidth,alt={Two orthogonal pencils of circles}]{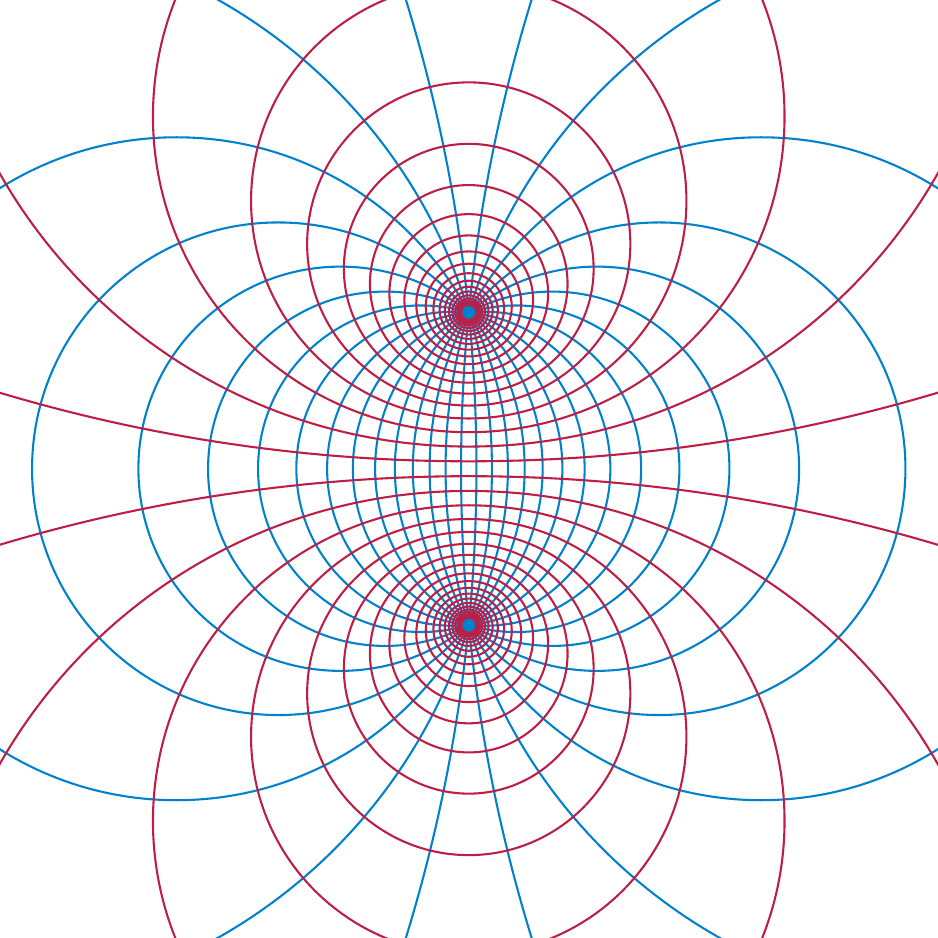}
\caption{Two orthogonal pencils of circles, from \cite{Epp-JGAA-21}. In spaces of dimension $\ge 3$, the spheres that are orthogonal to a pencil do not form another pencil.}
\label{fig:apollo}
\end{figure}

Any two hyperplanes in $\mathbb{E}^{D+1}$ intersect in a $(D-1)$-dimensional affine subspace (possibly at infinity). Define a \emph{pencil of spheres} to be the family of spheres whose defining hyperplanes, under the lifting transformation, all contain some given $(D-1)$-dimensional  affine subspace. (These can also be defined as the families of spheres determined as the horizonts of viewpoints on a fixed line, but we will not use this.) Every two spheres on $\mathbb{S}^D$ belong to a unique pencil, the pencil whose affine subspace is the intersection of the defining hyperplanes of the two spheres. For examples of pencils, see \cref{fig:apollo}.

\begin{observation}
\label{obs:pencil-orthogonality}
Let $S_0$, $S_1$, and $S_2$ be spheres on $\mathbb{S}^D$. Then $S_0$ is orthogonal to both $S_1$ and $S_2$ if and only if $S_0$ is orthogonal to every circle in the pencil containing $S_1$ and $S_2$.
\end{observation}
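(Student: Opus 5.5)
The approach is to translate orthogonality into incidence using the dual correspondence of the lifting transformation. Let $v_0$ be the viewpoint whose horizon is $S_0$; this is a point of $\mathbb{E}^{D+1}$, possibly at infinity when $S_0$ is a great sphere. For $i=1,2$ let $H_i$ be the hyperplane with $H_i\cap\mathbb{S}^D=S_i$. By the characterization stated just before, $S_0$ is orthogonal to a sphere $S$ exactly when $v_0$ lies on the defining hyperplane of $S$. So the observation reduces to a statement about hyperplanes through a common point.

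The plan has three steps.

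\textbf{Step 1 (the pencil as hyperplanes).} By definition, the pencil containing $S_1$ and $S_2$ is the family of spheres $H\cap\mathbb{S}^D$, where $H$ ranges over the hyperplanes containing the $(D-1)$-dimensional affine subspace $L=H_1\cap H_2$. Here $L$ is taken projectively, and is at infinity if $H_1\parallel H_2$.

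\textbf{Step 2 (the ``if'' direction).} This direction is immediate, since $S_1$ and $S_2$ themselves belong to the pencil.

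\textbf{Step 3 (the ``only if'' direction).} If $S_0$ is orthogonal to $S_1$ and $S_2$, then $v_0\in H_1\cap H_2=L$. Every hyperplane $H$ of the pencil contains $L$, so it contains $v_0$. Applying the characterization again, each member $H\cap\mathbb{S}^D$ of the pencil is orthogonal to $S_0$.

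The main obstacle is making the incidence characterization uniform in the degenerate situations:
\begin{itemize}
\item $v_0$ at infinity, when $S_0$ is a great sphere;
\item $L$ at infinity, when $H_1\parallel H_2$.
\end{itemize}
I would handle both cases by working in projective space $\mathbb{P}^{D+1}$. The pole--polar relation with respect to the quadric $\mathbb{S}^D$ is projective, and the statement ``$v_0$ lies on the polar hyperplane of $S$'' remains valid there. Both the pencil and the incidence condition are then linear in homogeneous coordinates.

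Concretely, write $S_i$ as the vectors $(n_i,c_i)$ with $H_i=\{x:n_i\cdot x=c_i\}$. Orthogonality of $S_0$ and $S_i$ is the vanishing of the Lorentzian bilinear form $n_0\cdot n_i-c_0c_i$ (up to normalization). This is linear in $(n_i,c_i)$, and the pencil consists of the nonzero combinations $\lambda(n_1,c_1)+\mu(n_2,c_2)$. So the result also follows from bilinearity, and that calculation serves as a check that avoids the case analysis.
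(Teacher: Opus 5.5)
Your proposal is correct and takes the same route as the paper. It turns orthogonality into incidence of $S_0$'s viewpoint with the defining hyperplanes, then notes that every hyperplane of the pencil contains $H_1\cap H_2$. Your projective, bilinear-form treatment of the degenerate cases (great spheres, parallel defining hyperplanes) is a worthwhile tightening of points the paper leaves implicit.
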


\begin{proof}
The orthogonality of $S_0$ to both $S_1$ and $S_2$ is equivalent, under the lifting characterization of orthogonality, to the assertion that the viewpoint from which $S_0$ is the horizon lies on the defining hyperplanes of both $S_1$ and $S_2$. This, in turn, is equivalent to the assertion that this viewpoint lies on all hyperplanes that contain the defining hyperplanes of both $S_1$ and $S_2$. But these are exactly the defining hyperplanes of the pencil containing $S_1$ and $S_2$, and applying again the lifting characterization of orthogonality, the assertion that the viewpoint for $S_0$ lies on these hyperplanes is equivalent to the assertion that it is orthogonal to all spheres in the pencil.
\end{proof}

The pencil of concentric spheres around a point does not include a (finite) hyperplane. Nevertheless, we have:

\begin{lemma}
\label{lem:orthogonal-hyperplane}
Every pencil of spheres that is orthogonal to a bounded Euclidean or hyperbolic sphere includes a Euclidean or hyperbolic hyperplane, respectively.
\end{lemma}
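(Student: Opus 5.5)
Fix a bounded sphere $S_0$ (Euclidean or hyperbolic) with center $c$, and a pencil $P$ of spheres on $\mathbb{S}^D$ each of which is orthogonal to $S_0$. By \cref{lem:central-hyperplane}, a member of $P$ is a Euclidean (respectively hyperbolic) hyperplane exactly when it passes through $c$. So it suffices to find one member of $P$ that passes through $c$. I will do this in the lifted picture, where passing through a point becomes a single linear condition on the defining hyperplane.

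Let $\hat c$ be the lift of $c$ to $\mathbb{S}^D$. The pencil $P$ consists of the spheres cut out by the hyperplanes of $\mathbb{E}^{D+1}$ containing a fixed $(D-1)$-dimensional affine subspace $L$, possibly at infinity.
\begin{itemize}
\item \emph{Case $L$ finite.} The hyperplanes containing $L$ form a one-parameter family that sweeps all of $\mathbb{E}^{D+1}$. Hence some hyperplane $H\supseteq L$ contains $\hat c$: the affine span of $L$ and $\hat c$ if $\hat c\notin L$, or any hyperplane through $L$ if $\hat c\in L$.
\item \emph{Case $L$ at infinity.} The defining hyperplanes are all parallel, with a common normal direction. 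Again exactly one of them passes through $\hat c$.
\end{itemize}
In either case take $H$ to be this hyperplane through $\hat c$.

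The main obstacle is checking that $H$ actually defines a sphere of $P$, i.e.\ that $H\cap\mathbb{S}^D$ is a genuine sphere and not empty or a single point. Since $\hat c\in H\cap\mathbb{S}^D$, the set is nonempty. It could still degenerate to the single point $\hat c$ if $H$ were tangent to $\mathbb{S}^D$ there. To rule this out I would use the orthogonality hypothesis through the viewpoint characterization. Let $v$ be the viewpoint whose horizon is the lift of $S_0$. Orthogonality of $S_0$ to the members of $P$ means every defining hyperplane contains $v$; if there are at least two such hyperplanes, then $v\in L$. Since $L\subseteq H$, we get $v\in H$. Now $v$ lies strictly outside $\mathbb{S}^D$, because $S_0$ is a genuine sphere. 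The point $\hat c$ is the lift of the center of $S_0$, so it lies strictly inside the cap bounded by the horizon of $S_0$. Hence the line $v\hat c$ is not tangent to $\mathbb{S}^D$ and crosses the sphere at $\hat c$. This line lies in $H$, so $H$ is not tangent to $\mathbb{S}^D$ at $\hat c$, and $H\cap\mathbb{S}^D$ is a nondegenerate sphere of $P$.

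(When $L$ is at infinity, the case $v$ at infinity, i.e.\ $S_0$ a great sphere, needs the same check with $v$ replaced by a direction; the argument is unchanged. For the hyperbolic case I would also verify that the resulting sphere is a hyperplane-type object and not the sphere at infinity; \cref{lem:central-hyperplane} should handle this once the member is known to be orthogonal to $S_0$.)

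By \cref{obs:pencil-orthogonality}, this member of $P$ is orthogonal to $S_0$. It passes through $c$, so by \cref{lem:central-hyperplane} it is a hyperplane of the appropriate geometry, which proves the lemma.
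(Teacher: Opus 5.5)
Your proposal is correct and follows the paper's own proof: take the hyperplane of $\mathbb{E}^{D+1}$ through the pencil's axis $L$ and the lifted center, note that it cuts out a member of the pencil passing through the center, and conclude with \cref{lem:central-hyperplane}. Your extra argument that $H$ is not tangent to $\mathbb{S}^D$ at $\hat c$, using $v\in L\subseteq H$, fills in a nondegeneracy detail the paper leaves implicit, and your separate handling of $L$ at infinity covers a case the paper does not treat explicitly.
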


\begin{proof}
Let $H$ be a hyperplane in $\mathbb{E}^{D+1}$ that contains both the center point of the (lifted) bounded sphere and the defining affine subspace of the pencil. Then the sphere obtained by intersecting $H$ with $S^D$ belongs to the pencil and by \cref{lem:central-hyperplane} is a hyperplane.
\end{proof}

\begin{lemma}
\label{lem:2sphere-orthogonal}
Let $S_1$ and $S_2$ be any two spheres, degenerate spheres, hyperspheres, or horospheres in Euclidean or hyperbolic space. Then this space contains a hyperplane through the center points of all bounded spheres that are orthogonal to both $S_1$ and $S_2$.
\end{lemma}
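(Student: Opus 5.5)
Consider the pencil $P$ of spheres containing $S_1$ and $S_2$. By \cref{obs:pencil-orthogonality}, a bounded sphere $S_0$ orthogonal to both $S_1$ and $S_2$ is orthogonal to every sphere of $P$. The plan is to find a single hyperplane $H$ belonging to $P$, and to show that it passes through the center of every such $S_0$.

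First take any one bounded sphere $S_0$ orthogonal to $S_1$ and $S_2$. If there is none, the statement is vacuous. By \cref{lem:orthogonal-hyperplane}, the pencil $P$, which is orthogonal to $S_0$, includes a Euclidean or hyperbolic hyperplane $H$. Now let $S_0'$ be any other bounded sphere orthogonal to $S_1$ and $S_2$. By \cref{obs:pencil-orthogonality} again, $S_0'$ is orthogonal to every member of $P$, and in particular to $H$. Since $H$ is a hyperplane orthogonal to the bounded sphere $S_0'$, \cref{lem:central-hyperplane} shows that $H$ passes through the center of $S_0'$. So this single $H$ contains the centers of all such bounded spheres.

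The main obstacle is that a pencil might contain several hyperplanes, so $H$ must not depend on the choice of $S_0$. The argument above already avoids this: any hyperplane in $P$ is orthogonal to every $S_0'$ and therefore, by \cref{lem:central-hyperplane}, contains every center. So the choice of $H$ is arbitrary, and uniqueness is not needed.

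A second point to check is that ``orthogonal'' in the unlifted space agrees with orthogonality on $\mathbb{S}^D$. This holds because the lifting transformation preserves orthogonality, as noted earlier. It also covers the degenerate cases, such as $S_1$ or $S_2$ being a hyperplane, horosphere or hypersphere, since all of these lift to spheres on $\mathbb{S}^D$.

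I would also verify that the member of $P$ produced by \cref{lem:orthogonal-hyperplane} is a genuine sphere on $\mathbb{S}^D$, i.e.\ that its defining hyperplane actually meets $\mathbb{S}^D$. The members of $P$ orthogonal to a real sphere $S_0$ have defining hyperplanes through the viewpoint of $S_0$, which lies outside $\mathbb{S}^D$. A hyperplane through the lifted center of $S_0$ should intersect $\mathbb{S}^D$, because that center lies on $\mathbb{S}^D$.
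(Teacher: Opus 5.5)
Your proposal is correct and matches the paper's proof. You fix one bounded sphere $S_0$ orthogonal to $S_1$ and $S_2$, then use \cref{obs:pencil-orthogonality} and \cref{lem:orthogonal-hyperplane} to get a hyperplane $H$ in their pencil. You then use \cref{obs:pencil-orthogonality} again together with \cref{lem:central-hyperplane} to put the center of every such sphere on $H$.

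Your closing check belongs to \cref{lem:orthogonal-hyperplane} rather than to this lemma. There, ``the center lies on $\mathbb{S}^D$'' only shows that the hyperplane touches $\mathbb{S}^D$. It cuts $\mathbb{S}^D$ in a genuine sphere because the line from the viewpoint of $S_0$ to the lifted center is not tangent to $\mathbb{S}^D$, since the lifted center does not lie on the horizon $S_0$.
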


\begin{proof}
Let $S_0$ be a bounded sphere orthogonal to $S_1$ and $S_2$; if $S_0$ does not exist, there is nothing to prove. Otherwise, by \cref{obs:pencil-orthogonality}, $S_0$ is orthogonal to the pencil of spheres containing $S_1$ and $S_2$, by \cref{lem:orthogonal-hyperplane}, this pencil contains a hyperplane $H$, and by \cref{obs:pencil-orthogonality} again, $H$ is orthogonal to all bounded spheres orthogonal to $S_1$ and $S_2$. $H$ contains all the centers of these orthogonal spheres, by \cref{lem:central-hyperplane}.
\end{proof}

\subsection{Antisimilitude}

For any two distinct spheres $A$ and $B$ in $D$-dimensional Euclidean space, there exist one or two spheres or hyperplanes (degenerate spheres) $M$  such that a sphere inversion through $M$ (or reflection across the hyperplane) takes $A$ to $B$; these have sometimes been called \emph{spheres of antisimilitude}~\cite{Cou-AMM-60}. When the interiors of $A$ and $B$ intersect, inversion through one such sphere $M$ takes the interior of $A$ to the exterior of $B$ and vice versa. When the interiors of $A$ and $B$ have a nonempty symmetric difference (the case of interest to us),  $M$ can be chosen so that inversion through $M$ preserves the interiors and exteriors of $A$ and $B$. One way to understand this~\cite{Epp-DCG-14} is to view $\mathbb{E}^D$ as the boundary of a halfspace model of $\mathbb{H}^{D+1}$, and each  sphere in $\mathbb{E}^D$ as the set of points at infinity of a hyperplane in $\mathbb{H}^{D+1}$. Then inversion through a sphere in $\mathbb{E}^D$ corresponds to reflection through the corresponding hyperplane in $\mathbb{H}^{D+1}$, and the spheres of antisimilitude of $A$ and $B$ correspond to the cell boundaries of a hyperbolic Voronoi diagram of the two hyperplanes corresponding to $A$ and $B$ in $\mathbb{H}^{D+1}$.

\begin{figure}[t]
\includegraphics[width=\columnwidth,alt={Refer to caption}]{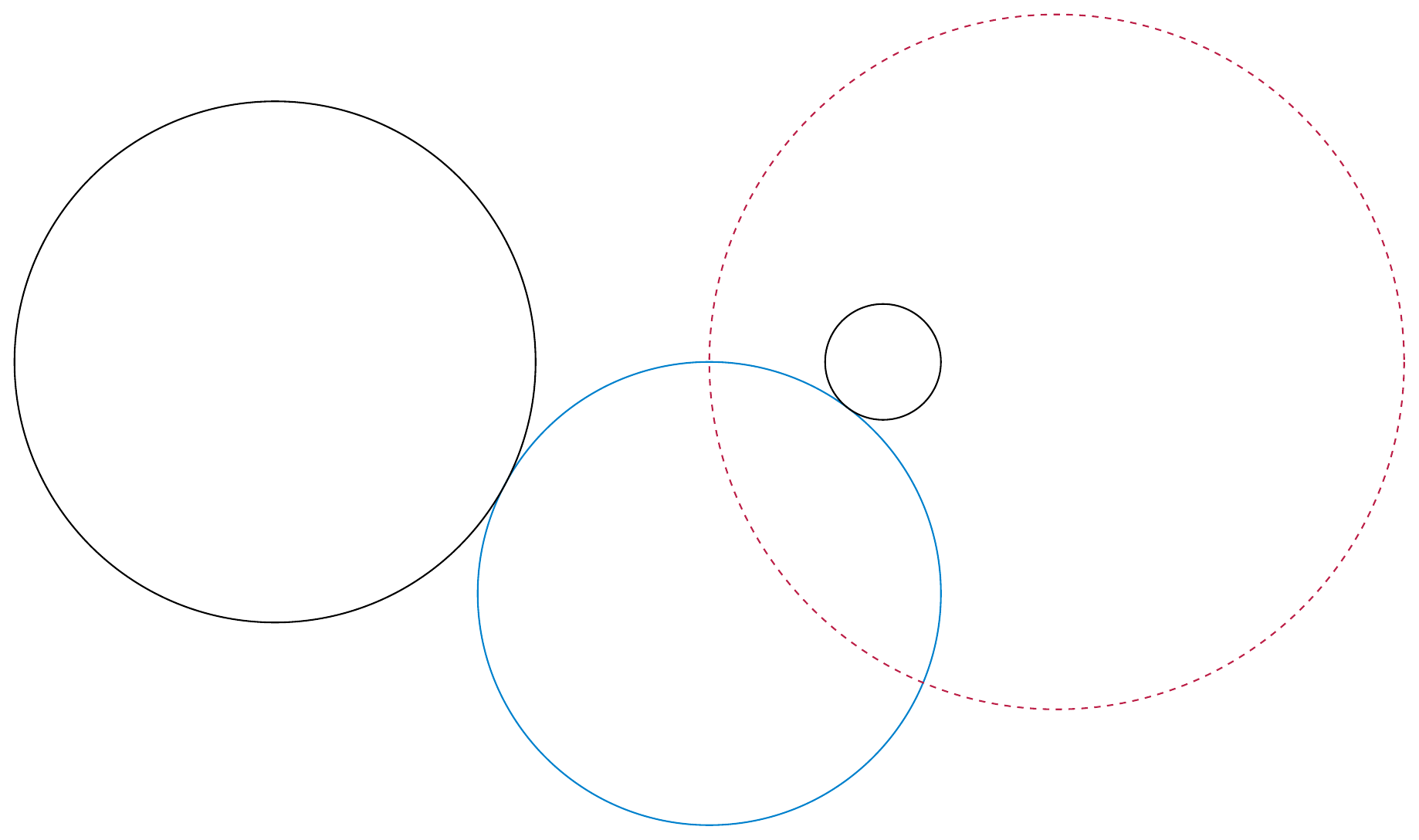}
\caption{Two given circles $A$ and $B$ (black) with their interior-preserving circle of antisimilitude $M$ (dashed red). The circle $C$ (blue) tangent to both $A$ and $B$ is orthogonal to $M$ (\cref{lem:antisimilitude-orthogonal}).}
\label{fig:antisimilitude}
\end{figure}

\begin{lemma}
\label{lem:antisimilitude-orthogonal}
If $A$ and $B$ are each externally tangent to a third sphere $C$ at distinct points, let $M$ be the sphere of antisimilitude such that inversion through $M$ takes $A$ to $B$ and maps interiors to interiors and exteriors to exteriors. Then $M$ is orthogonal to $C$.
\end{lemma}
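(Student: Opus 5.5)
Since orthogonality, tangency, and interior/exterior are all preserved by inversion, it is enough to check the statement after a suitable Möbius normalization. I will work in $\mathbb{E}^D$; the hyperbolic case then follows through the Poincaré model, where hyperbolic spheres are Euclidean spheres and the notions involved are the same. Let $a$ and $b$ be the points where $C$ touches $A$ and $B$. Apply an inversion centered at a point of $C$ other than $a$ and $b$ (for instance, a point just outside $C$ near its boundary, chosen generically), so that $C$ becomes a hyperplane. Interiors and exteriors may be relabeled in a way that can be tracked: after the inversion, $A$ and $B$ are tangent to the hyperplane $C$ at distinct points, and both lie on the same side of $C$. This is because external tangency to $C$ means both lie in the complement of $C$'s interior, which is a single half-space.

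The key step is to identify $M$ in this normalized picture. Suppose $A$ and $B$ are balls (or their images) tangent to the hyperplane $C$ from the same side, at points $a \ne b$, with radii $r_A$ and $r_B$. If $r_A = r_B$, the reflection across the perpendicular bisector hyperplane of $ab$ swaps them and preserves interiors. That hyperplane is perpendicular to $C$, hence orthogonal to it. If $r_A \ne r_B$, let $o$ be the external homothety center on line $ab$, that is, the point of $C$ where the common external tangent meets it. The inversion centered at $o$ with radius $\sqrt{|oa|\,|ob|}$ maps $C$ to itself, since $o \in C$. It maps $a$ to $b$, and it takes $A$ to a sphere tangent to $C$ at $b$ on the same side of $C$. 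Because the homothety from $o$ maps $A$ to $B$, the standard power-of-a-point computation shows that this image is exactly $B$. This inversion maps interior to interior because $o$ lies outside both balls. Its sphere $M$ is centered on $C$, so $M$ is orthogonal to $C$.

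The remaining step is uniqueness. The lemma refers to \emph{the} interior-preserving sphere of antisimilitude, so I need that sphere to coincide with the one constructed above. I expect this to be the main obstacle, because the paper only asserts that one or two spheres of antisimilitude exist. My plan is to use the hyperbolic viewpoint from the preceding paragraph. In $\mathbb{H}^{D+1}$, $A$ and $B$ correspond to hyperplanes, and the inversions mapping one to the other correspond to reflections through their bisectors. There are at most two such bisectors, and exactly one of them respects the chosen sides. Since the constructed $M$ is one such inversion and preserves sides, it must be the one named in the lemma. Finally, I undo the initial normalizing inversion. It conjugates inversion through $M$ to inversion through the image of $M$, so it preserves the antisimilitude relationship, and it also preserves orthogonality. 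This yields $M \perp C$ in the original configuration.
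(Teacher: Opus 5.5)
Your argument is correct, but it takes a different route from the paper's.

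The paper never normalizes or constructs $M$. It takes the given interior-preserving $M$ and argues that the tangency points lie on opposite sides of $M$, so $C$ crosses $M$. It then observes that inversion through $M$ sends $C$ to a sphere through $C\cap M$ that is still externally tangent to both $A$ and $B$. An informal ``pulls away / crosses into'' argument claims $C$ is the only such sphere, so $C$ is invariant under the inversion and therefore orthogonal to $M$.

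You instead send a point of $C$ to infinity and write $M$ down explicitly. When $r_A=r_B$ it is the perpendicular bisector of $ab$. Otherwise it is the sphere centered at the external homothety center $o\in C$ with radius $\sqrt{|oa|\,|ob|}$. Your power-of-a-point check that this inversion carries $A$ onto $B$ is right, and $o\in C\setminus\{a,b\}$ lies outside both closed balls, so interiors go to interiors.

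The price of your route is that you must show the interior-preserving sphere of antisimilitude is unique. The paper's argument does not need this, since it works for whatever side-preserving $M$ is given, although the statement's definite article presupposes it. Your bisector argument in $\mathbb{H}^{D+1}$ supplies uniqueness. The one line you should add covers the case where $P_A$ and $P_B$ cross: the reflections in the two bisectors differ by the half-turn about $P_A\cap P_B$, which maps $P_A$ to itself and swaps its sides, so exactly one of them sends the interior side of $A$ to the interior side of $B$.

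What your approach buys is a checkable computation in place of the paper's heuristic ``this could only be $C$ itself'' step. It also gives a direct existence proof of $M$ in this configuration, instead of relying on the general remark before the lemma. The paper's approach is shorter and coordinate-free.

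One slip to fix: the parenthetical ``a point just outside $C$ near its boundary'' contradicts your main clause, because inverting there would not make $C$ a hyperplane. The center must lie on $C$ itself, at any point other than $a,b$. That choice is also exactly what guarantees the center lies outside both closed balls $A$ and $B$, which your interior bookkeeping relies on.
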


\begin{proof}
Because the two points of tangency of $A$ and $B$ with $C$ are distinct, $A$ and $B$ each have interior points not contained in the other sphere (in small neighborhoods of the points of tangency) and $M$ exists. Because inversion through $M$ maps interior points of $A$ to interior points of $B$ and vice versa, the two points of tangency must be on opposite sides of $M$ and $C$ crosses $M$. Inversion through $M$ fixes all points of $M$, including the points of intersection with $C$, preserves the interiors and exteriors of $A$ and $B$, and maps tangencies to tangencies, so the image of $C$ after the inversion is a sphere with the same intersection with $M$ that is also externally tangent to $A$ and $B$. This could only be $C$ itself, for any other sphere with the same intersection would be contained by $C$ on one side of $M$ and would contain $C$ on the other side, causing it to pull away from one of $A$ and $B$ on the side contained by $C$ and to cross into the other of $A$ and $B$ on the other side. Because the image of $C$ after the inversion is $C$ itself, $C$ must be perpendicular to $M$.
\end{proof} 

A system of circles $A$, $B$, $C$, and $M$ meeting the conditions of the lemma is depicted in \cref{fig:antisimilitude}.

\subsection{The Main Lemma}

Recall \cref{lem:biclique}: In any $(3,b)$-tangency biclique in $\mathbb{E}^D$ or $\mathbb{H}^D$ defined by spheres $A_i$ and $B_j$, the centers of the spheres $B_j$ all lie on a common hyperplane.

\begin{proof}[Proof of \cref{lem:biclique}]
If any one of the spheres $A_i$ is contained in another sphere $A_j$, we may assume by renumbering the spheres if necessary that $A_1$ is contained in $A_2$. Then in order for the spheres $B_j$ to be tangent to $A_1$, without crossing into $A_2$, $A_1$ must be tangent to $A_2$ (externally to $A_1$ and internally to $A_2$), and all of the spheres $B_j$ must share this same point of tangency. The same reasoning applied to any two of the spheres $B_j$ shows that $A_3$, also, shares the same point of tangency. Thus in this special case, all of the sphere centers lie on a Euclidean or hyperbolic line, the line perpendicular to all of the spheres through this point of tangency.

Otherwise, if the $(3,b)$-tangency biclique is given in $\mathbb{H}^D$, then we view it as part of a Poincar\'e ball model of $\mathbb{H}^D$, giving us a corresponding $(3,b)$-tangency biclique in $\mathbb{E}^D$ (but with different points as the sphere centers). This allows us to apply \cref{lem:antisimilitude-orthogonal} to the spheres $A_1$, $A_2$, and their tangent spheres $B_j$, giving one sphere of antisimilitude $M_{12}$ orthogonal to all of the spheres $B_j$. By applying \cref{lem:antisimilitude-orthogonal} a second time to the spheres $A_1$, $A_3$, and their tangent spheres $B_j$, we obtain a second sphere of antisimilitude $M_{13}$ orthogonal to all of the spheres $B_j$. These two spheres $M_{12}$ and $M_{13}$ are distinct because inversion through them maps $A_1$ to the two distinct spheres $A_2$ and $A_3$. If the $(3,b)$-tangency biclique was given in $\mathbb{H}^D$ then they remain orthogonal to all of the spheres $B_j$ in $\mathbb{H}^D$, because the Poincar\'e model is conformal (it preserves angles).

By applying \cref{lem:2sphere-orthogonal} to $M_{12}$ and $M_{13}$ we obtain a hyperplane in $\mathbb{E}^D$ or $\mathbb{H}^D$ that contains the center points of all bounded spheres that are orthogonal to $M_{12}$ and $M_{13}$. In particular this hyperplane contains the center points of all of the given spheres $B_j$.
\end{proof}

\section{Conclusions}

We have extended the Erd\H{o}s--Anning theorem on integer distances to arbitrary-dimension hyperbolic spaces, and made it quantitative also for arbitrary-dimension Euclidean spaces. A natural question raised by our work and previous work on generalizations of the Erd\H{o}s--Anning theorem to non-Euclidean metrics~\cite{Epp-JoCG-26} is which other metrics obey analogous theorems. For instance, convex distance functions on $\mathbb{R}^D$ have badly-behaved Voronoi diagrams for which proofs based on $D$-tuples of landmarks may be problematic~\cite{IckKleLe-FI-95}; might it be possible to generalize the theorem to these methods using larger numbers of landmarks?

\bibliographystyle{plainurl}
\bibliography{tangent}

\end{document}